\documentclass[10pt]{article}
\usepackage{fullpage,graphicx}
\usepackage{amssymb,amsmath,amsthm}
\usepackage{natbib}
\title{Sign-constrained least squares estimation for high-dimensional regression}
\author{Nicolai Meinshausen \\ University of Oxford, UK \\ meinshausen@stats.ox.ac.uk }
\newcommand{\mb}{\mathbf} 
\renewcommand{\S}{\mb \Sigma}
\newtheorem{assumption}{Assumption}
\newtheorem{theorem}{Theorem}
\newtheorem{lemma}{Lemma}
\newtheorem{corollary}{Corollary}
\newtheorem{remark}{Remark}

\begin{document}
\maketitle
\begin{abstract} Many regularization schemes for high-dimensional regression have been put forward. 
Most require the choice of a tuning parameter, using model selection criteria or cross-validation schemes. We show that a simple non-negative or sign-constrained least squares is a very simple and effective regularization technique for a certain class of high-dimensional regression problems. The sign constraint has to be derived via prior knowledge or an initial estimator but no further tuning or cross-validation is necessary.  The success depends on  conditions that are easy to check in practice. A sufficient  condition for our results is that most variables with the same sign constraint are positively correlated.  For a sparse optimal predictor,  a non-asymptotic bound on the L1-error of the regression coefficients is then proven. Without using any further regularization, the regression vector can be estimated consistently as long as $\log(p)s/n\rightarrow 0$ for $n\rightarrow\infty$, where $s$ is the sparsity of the optimal regression vector, $p$ the number of variables and $n$  sample size. Network tomography is shown to be an application where the necessary conditions for success of non-negative least squares are naturally fulfilled and empirical results confirm the effectiveness of the sign constraint for sparse recovery.
\end{abstract}

\section{Introduction}

High-dimensional regression problems are characterized by a large number of predictor variables in relation to sample size.  Regularization (in a broad sense) is of critical importance for high-dimensional problems and much attention has been paid to various schemes and their properties in recent years, including the \emph{Ridge} estimator \citep{hoerl1970ridge},  \emph{non-negative Garrotte} \citep{breiman95better}, the \emph{Lasso} \citep{tibshirani96regression} and various variations of the latter, including the \emph{group Lasso} \citep{yuan2006model} and \emph{adaptive Lasso} \citep{zou05adaptive}.
Datasets with very low signal-to-noise ratio offer similar challenges to high-dimensional problems even if the notional sample size is quite high.

Sign-constraints on the regression coefficients are a simpler regularization and have been first advocated by I.J. Good, as covered in the book \citet{lawson1995solving}. There is a wide range of problems where the sign of the regression coefficients can either be estimated by an initial estimator or where it is known a priori, such as in image processing and spectral analysis \citep{waterman1977c1,bellavia2006interior,donoho1992maximum,chen2009nonnegativity}. Sign-constraints have also been implemented for matrix factorizations, specifically the \emph{non-negative Matrix factorization} \citep{lee1999learning,lee2001algorithms,ding2010convex} and  \emph{non-negative least squares} regression can be a useful tool for this factorization \citep{kim2007sparse}.  We study the performance of non-negative least squares type problems under a so-called \emph{Positive Eigenvalue Condition}, which can be checked for any given dataset by solving a quadratic programming problem. A sufficient condition uses only the minimum of all entries in the design matrix.  It is shown that non-negative (or, in general, sign-constrained) least squares is a surprisingly effective regularization technique for high-dimensional regression problems under these conditions.  If the \emph{Positive Eigenvalue Condition} is not fulfilled, the sign constraint is still a good ingredient in a regularization framework. The \emph{non-negative Garrote} \citep{breiman95better} is, for example, making use of a sign-constraint, where the signs are derived from an initial estimator as is the \emph{positive Lasso} \citep{efron04least}. 

The data are assumed to be given by a $n\times 1$-vector of real-valued observations $\mb Y$ and a $n\times p$-dimensional matrix $\mb X$, where  column $k$ of $\mb X$ contains all $n$ samples of the $k$-th predictor variable for $k=1,\ldots,p$.
The non-negative least squares (NNLS)  regression estimator is defined as 
\begin{equation} \label{eq:nnls}      \hat{\beta} \; :=\; \mbox{argmin}_{\beta} \; \| \mb Y - \mb X \beta\|_2^2 \quad \mbox{such that} \;\; \min_k\beta_k \ge 0  . \end{equation}
We will work with a positivity constraint without limitation of generality since variables that are constrained to be negative can be replaced by their negative counterpart and the problem can thus always be framed as a non-negative least squares optimisation. Problem (\ref{eq:nnls}) is a convex optimization problem and can be solved with general quadratic programming problem solvers, including active set \citep{lawson1995solving}, iterative \citep{kim2006new} and interior-point approaches \citep{bellavia2006interior}. A tailor-made fast approximate algorithm based on random projections has recently been proposed in \citet{boutsidis2009random}. 
The recent manuscript \citet{slawskisparse} contains independent work on the behaviour of NNLS in high-dimensions. Using the same \emph{Positive Eigenvalue Condition} (which is called self-regularizing design condition),
 a bound on the prediction error of NNLS and a sparse recovery property after hard thresholding  are shown in \citet{slawskisparse}. Our main focus is on sparse recovery in the $\ell_1$-sense. The bounds on prediction error are also of different nature since the assumptions are different. We make use of the so-called compatibility condition which is appears in most sparse recovery results in the $\ell_1$-norm penalized estimation literature \citep{van2009conditions} and derive, with the help of this condition, tight non-asymptotic bounds on the prediction error.

Note that the non-negative least squares estimator (\ref{eq:nnls}) does not require the choice of a tuning parameter beyond choosing the sign of the coefficients. Imposing a sign-constraint might seem like a very weak regularization but it will be shown that the estimator is remarkably different from the un-regularized least squares estimator. It can cope with high-dimensional problems, where the number of predictor variables vastly exceeds sample size. It will be shown to be a consistent estimator as long as the underlying optimal prediction is sufficiently sparse (ie using only a small subset of all predictor variables) and the so-called \emph{Positive Eigenvalue Condition} is fulfilled.

The manuscript is organized as follows. The notation and the main two assumptions, the compatibility and \emph{Positive Eigenvalue Condition}, are introduced in Section~\ref{sec:notation}. Our main result, a $\ell_1$-bound on the difference between the NNLS estimator and the optimal regression coefficients, is shown in Section~\ref{sec:main}, along with a bound on the prediction error.

\section{Notation and Assumptions}
\label{sec:notation}
We assume that the $n$ samples $\mb Y\in\mathbb{R}^n$ are drawn from $\mb X \beta^* +\varepsilon$ for some p-dimensional vector $\beta^*$ with $\min_k\beta^*_k\ge 0$ and $\varepsilon \sim \mathcal{N}(0,\sigma^2)$ for some $\sigma>0$.
Let $S$ be the set of non-zero entries of the optimal solutions, $S:\{k:\beta^*_k\neq 0 \}$ and $N=S^c$ be the complement of $S$. We could also let $\beta^*$ be the best approximation to the data-generating model under positivity constraints but will refrain from doing so for notational simplicity. 
We assume that the columns of $\mb X$ are standardized to $\ell_2$-norm of $n$. Despite not necessarily assuming that the columns are mean-centered, we call  $\hat{\S}= n^{-1} \mb X^T \mb X$ the covariance matrix throughout. 

We make two major assumptions for the main result, one about sparse eigenvalues and another about the positive eigenvalue between predictor variables. 

\subsection{Compatibility Condition}
There has been much recent work on the properties of the Lasso \citep{tibshirani96regression}. Many similar conditions for success of the Lasso penalization schemes have been derived \citep[for example][]{zhang06model,meinshausen06lassotype,wainwright06sth,bunea06agr,bunea06sparsity,van2008high,bickel07dantzig}. A good overview of all conditions and their relations is given in \citet{van2009conditions}. The weakest condition is based on the notion of $(L,S)$ restricted $\ell_1$-eigenvalues.

The $(L,S)$ restricted $\ell_1$-eigenvalue of matrix $\mb A$ is defined as: 
 \[ \phi^2_{\mathit{compatible}}(L,S,\mb A) := \min\Big\{ s \frac{\beta^T \mb A\beta }{ \| \beta \|_1^2 } : \beta \in \mathcal R(L,S) \Big\},\] 
where $\mathcal R(L,S) = \{\beta: \| \beta_N\|_1 \le L \| \beta_S\|_1\}$ and $s=|S|$.

A lower bound on this restricted eigenvalue is necessary for success of the Lasso, either in a prediction loss or coefficient recovery sense and was called the compatibility condition in \citet{van2009conditions}. It was shown to be weaker than all similar conditions such as the \emph{Restricted Isometry Property} \citep{candes2005dss}.

We make the following assumption.
\begin{assumption}[Compatibility Condition]
There exists some $\phi >0$ such that the $(L,S)$-restricted $\ell_1$-eigenvalue 
$\phi^2_{\mathit{compatible}}(L,S,\hat{\S}) \ge \phi.$
\end{assumption}
The value of $L$ will be specified in Theorem~\ref{theorem:main}. 

\begin{remark}
The  assumption is formulated for the empirical covariance matrix $\hat{\S}$ but can also easily be reformulated on the population covariance matrix $\S$ for random design.
Assume that the  maximal difference between the population and empirical covariance matrix is bounded by $\delta >0$, that is 
$\|\hat{\S} - \S\|_\infty \le \delta.$ This assumption is fulfilled with high probability for many data sets with larger  sample size. If the predictors have for example a multivariate normal distribution (which will not be assumed elsewhere), then the condition is fulfilled with probability $1-2\exp(-t)$ for $\delta \ge \sqrt{u} +u$ with $u=(4t+8\log(p))/n$, see (10.1) in \citet{van2009conditions}.
If $\delta\le \phi^2/ (4(L+1)^2s)$, then $\phi^2_{\mathit{compatible}}(L,S,\S) \ge \phi$ implies $\phi^2_{\mathit{compatible}}(L,S,\hat{\S}) \ge \phi/2$.
The proof follows from the inequality  $\phi^2_{\mathit{compatible}}(L,S,\hat{\S}) \ge \phi^2_{\mathit{compatible}}(L,S,\S) -(L+1) \sqrt{\delta s}$  in Corrolary~10.1 in 
\citet{van2009conditions}. 
The \emph{Compatibility Condition} could thus  be imposed on the population covariance matrix instead of the empirical covariance matrix. 
\end{remark}

\subsection{Positive eigenvalue condition}

The following \emph{Positive Correlation Condition} is the main assumption necessary to show success of non-negative least squares. 

The positively constrained minimal $\ell_1$- eigenvalue of matrix $\mb A$ is defined as 
\[ \phi^2_{pos}(\mb A) := \min\Big\{ \frac{\beta^T \mb A \beta}{ \|\beta\|_1^2} : \min_k \beta_k \ge 0\Big\},\]
A lower bound on this restricted eigenvalue will be a sufficient condition for sparse recovery success of NNLS.
\begin{assumption}[Positive Eigenvalue Condition]
There exists some $\nu > 0 $ such that $\phi^2_{pos}( \hat{\S}) \ge \nu$.
\end{assumption}

A lower bound on this eigenvalue seems to be a much stricter condition than the \emph{Compatibility Condition}. However, the latter allows for positive and negative regression coefficients, while the \emph{Positive Eigenvalue Condition} is restricted to positive coefficients. There are thus some immediate examples where it is fulfilled, which we discuss below.

\paragraph{Example I: strictly positive covariance matrix.} 
The \emph{Positive Correlation Condition} is fulfilled if $\min_{i,j} \hat{\S} \ge \nu >0$, that is all entries in the covariance matrix are strictly positive. Again, this condition could also be formulated for the population covariance matrix, using a bound on $\|\S - \hat{\S}\|_\infty$.

We also remark on the case of general sign-constraints (some variables constrained to be positive, some negative). The condition applies then to the dataset where all variables with a negativity constraint have been replaced with their negative counterparts.  The constraint on the original covariance matrix is thus that it forms two blocks. The variables in the first block are the variables with a positivity constraint and the second block is formed by all variables with a negativity constraint. Correlations are required to be positive within a block and negative between blocks. 

A generalization of Example I is the following.

\paragraph{Example II: only few negative entries.} 
Let $\mathcal{A}:=\{i : \hat{\S}_{ij}<0 \mbox{  for some } 1\le j\le p\}$ be the minimal set such that $\hat{\S}_{ij}<0$ implies $\{i,j\}\subseteq \mathcal{A}$ for all $1\le i,j\le p$. The \emph{Positive Eigenvalue Condition} is fulfilled if both of the conditions below are fulfilled  for some $\nu>0$.
\begin{enumerate}
\item All entries of the covariance matrix are strictly positive on  $\mathcal{A}^c$, that is $\hat{\S}_{ij} \ge 2\nu $ if $\{i,j\}\subseteq \mathcal{A}^c$ for all $1\le i,j\le n$.
\item A restricted eigenvalue condition holds on the set $\mathcal{A}$, ie \[ \min \Big\{ \frac{ \beta^T \hat{\S} \beta}{\|\beta\|_1^2}: \beta_k=0 \mbox{   for all  } k\in \mathcal{A}^c\Big\} > 2\nu.\]
\end{enumerate}

If the set $\mathcal{A}$ is very small, in particular much smaller than $n$, the latter restricted $\ell_1$-eigenvalue condition is in general not very restrictive. The important criterion is thus whether the set $\mathcal{A}$ is small compared to the sample size.

\paragraph{Example III: block matrix.} For a $p\times p$-matrix $\mb A$ and a set $K\subseteq\{1,\ldots,p\}$, let $\mb A_{KK}$ be the $|K|\times |K|$-submatrix formed by all elements in set $K$. Suppose 
\begin{enumerate}
\item Entries of the covariance matrix can be negative but fulfil $\hat{\S}_{ij} \ge - \rho/p^2$ for all $1\le i,j\le n$ and some $\rho>0$.
\item The set of variables $\{1,\ldots,p\}$ can be partitioned into $B\ge 1$ blocks $B_j\subseteq \{1,\ldots,p\}$ such that $\phi_{pos}^2(\hat{\S}_{B_jB_j}) \ge (\nu +\rho) B$ for all $j=1,\ldots,B$.
\end{enumerate}
A more specific example is thus: all entries in $\hat{\S}$ are larger than $-\rho/p^2$ for some $\rho>0$ and  $\hat{\S}_{ij} \ge (\nu +\rho)B$ if both $i,j$ are within the same block.

The \emph{Positive Eigenvalue Condition} is fulfilled with parameter $\nu>0$.

The positive aspect of the condition is that it is very easy to check in practice whether it applies (at least approximately) and whether one would thus expect the bounds shown below to apply to a given dataset.

\section{Main Results}
\label{sec:main}
It will be shown that non-negative least squares leads to a good recovery of the optimal sparse regression vector for high-dimensional data. We study the $\ell_1$-error in the regression vector, which also yields a bound on the $\ell_2$-error and prediction loss.

\begin{theorem}\label{theorem:main}
Assume that the \emph{Positive Eigenvalue Condition} holds with $\nu>0$. Choose any $0<\eta<1/3$. Assume that the compatibility condition holds with $\phi>0$  for $L=4\nu^{-1} $. Setting 
\[ K_{p,\eta}^2:=  2 \log\big(\frac{\sqrt 2 p}{\sqrt{\pi} \eta}\big) \]   and assuming $\min_{k\in S }\beta_k >  K_{p,\eta} \sigma/\sqrt{n \phi}$, it then holds with probability at least $1-\eta$ that  
\begin{eqnarray}
\|\hat{\beta} - \beta^*\|_1 & \le &   K_{p,\eta} \, (5/\nu +4 /\sqrt{\phi}) \,\frac{s\sigma}{\sqrt{n}} \label{eq:toshow}
\end{eqnarray}
\end{theorem}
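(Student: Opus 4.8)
Write $\delta:=\hat\beta-\beta^*$ and $\lambda:=K_{p,\eta}\sigma/\sqrt n$. Since $\hat\beta$ and $\beta^*$ have non-negative coordinates, $\delta_N=\hat\beta_N\ge 0$, so $\delta$ can fail to be non-negative only on $S$. As the columns of $\mb X$ have squared norm $n$, each $\mb X_k^T\varepsilon$ is $\mathcal N(0,n\sigma^2)$; the Gaussian tail bound $\mathbb P(|Z|>t)\le\sqrt{2/\pi}\,e^{-t^2/2}/t$ together with a union bound over the $p$ columns show that $\mathcal E:=\{\max_k|\mb X_k^T\varepsilon|\le K_{p,\eta}\sigma\sqrt n\}$ has probability at least $1-\eta$, the displayed choice of $K_{p,\eta}^2$ being exactly what makes the union bound collapse to (at most) $\eta$. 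Everything below is argued on $\mathcal E$.

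\textbf{KKT consequences.} Since $\mb X^T(\mb X\hat\beta-\mb Y)=n\hat{\S}\delta-\mb X^T\varepsilon$, the optimality conditions for \eqref{eq:nnls} read $n\hat{\S}\delta-\mb X^T\varepsilon\ge 0$ coordinate-wise (dual feasibility) and $\hat\beta^T(n\hat{\S}\delta-\mb X^T\varepsilon)=0$ (complementary slackness). Combining the slackness identity with the consequence $\beta^{*T}(n\hat{\S}\delta-\mb X^T\varepsilon)\ge 0$ of dual feasibility and $\beta^*\ge 0$ gives the sharp basic inequality $\|\mb X\delta\|_2^2=n\,\delta^T\hat{\S}\delta\le\delta^T\mb X^T\varepsilon\le K_{p,\eta}\sigma\sqrt n\,\|\delta\|_1$. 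Separately, testing dual feasibility against the non-negative vector equal to $\delta_k$ on $\{k\in N:\hat\beta_k>0\}$ and zero elsewhere, using complementary slackness to sharpen it to an equality on that set, lower-bounding the resulting diagonal quadratic form through the \emph{Positive Eigenvalue Condition} (it is $\ge\nu\|\delta_N\|_1^2$) and controlling the off-diagonal term with the crude bound $|\hat{\S}_{k\ell}|\le 1$, produces a cone-type estimate $\nu\|\delta_N\|_1\le\|\delta_S\|_1+\lambda$, that is, $\|\delta_N\|_1\le\nu^{-1}(\|\delta_S\|_1+\lambda)$.

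\textbf{The $\beta$-min split.} If $\hat\beta_k=0$ for some $k\in S$ then $|\delta_k|=\beta^*_k\ge\min_{k\in S}\beta^*_k>\lambda/\sqrt\phi$, so $\|\delta_S\|_1>\lambda/\sqrt\phi$; contrapositively, $\|\delta_S\|_1\le\lambda/\sqrt\phi$ forces $\hat\beta_k>0$ for every $k\in S$, and then complementary slackness holds with equality throughout $S$. I split on this threshold. When $\|\delta_S\|_1\le\lambda/\sqrt\phi$, the equality $n(\hat{\S}\delta)_S=(\mb X^T\varepsilon)_S$ together with the cone estimate controls $\|\delta\|_1$ directly at the required order. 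When instead $\|\delta_S\|_1>\lambda/\sqrt\phi$, the cone estimate forces $\|\delta_N\|_1\le 4\nu^{-1}\|\delta_S\|_1$, i.e. $\delta\in\mathcal R(L,S)$ with $L=4\nu^{-1}$; the \emph{Compatibility Condition} is then available, and combined with the basic inequality --- with the lower bound $\|\delta_S\|_1>\lambda/\sqrt\phi$ used to calibrate the estimate --- it bounds $\|\delta_S\|_1$, and hence via the cone estimate also $\|\delta_N\|_1$, by a constant multiple of $K_{p,\eta}\sigma s/(\sqrt n\sqrt\phi)$.

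\textbf{Assembly and the hard point.} Adding $\|\delta_S\|_1+\|\delta_N\|_1$ and carrying the numerical constants through the cone and compatibility steps produces the factor $5/\nu+4/\sqrt\phi$; the $\ell_2$- and prediction-loss statements then follow from $\|\delta\|_2\le\|\delta\|_1$ and the basic inequality. I expect the real work to lie in this constant bookkeeping in the regime $\|\delta_S\|_1>\lambda/\sqrt\phi$: plugging the compatibility inequality directly into the basic inequality only yields the coarser rate $s\lambda/\phi$, so obtaining the advertised $1/\sqrt\phi$ --- and keeping the $\nu$- and $\phi$-contributions \emph{additive}, $5/\nu+4/\sqrt\phi$, rather than multiplicative --- forces one to exploit the $\beta$-min threshold $\lambda/\sqrt\phi$, both to discard the unfavourable small-$\|\delta_S\|_1$ regime and to take the square root of the compatibility bound at the right stage. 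Establishing the cone-type estimate from the \emph{Positive Eigenvalue Condition}, and in particular dealing cleanly with the coordinates of $S$ on which $\hat\beta$ may vanish, is the other spot that needs care.
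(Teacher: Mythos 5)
The parts you do write out are correct, and they take a genuinely different route from the paper: you attack $\delta=\hat\beta-\beta^*$ directly through the KKT system of (\ref{eq:nnls}) (the basic inequality $n\,\delta^T\hat{\S}\delta\le\|\mb X^T\varepsilon\|_\infty\|\delta\|_1$ and the cone-type estimate $\nu\|\delta_N\|_1\le\|\delta_S\|_1+\lambda$, with $\lambda=K_{p,\eta}\sigma/\sqrt n$, obtained by testing the stationarity equalities on $\{k\in N:\hat\beta_k>0\}$ — both of these check out, and the probability calculation for your event $\mathcal E$ is the same arithmetic as the paper's). The paper never uses the KKT conditions of the full problem: it inserts the oracle estimator $\hat\beta^{oracle}$ (NNLS restricted to $S$), uses the $\beta$-min condition to show it coincides with the unrestricted least squares on $S$ with probability at least $1-s(1-\Phi(C))$ (Lemma~\ref{lemma:equiv}), so that the residual $\mb R=\mb Y-\mb X\hat\beta^{oracle}$ is exactly orthogonal to the columns in $S$ and has inner product at most $C\sigma\sqrt n$ with columns in $N$ (Lemma~\ref{lemma:grad}), and then runs the positive-eigenvalue/compatibility argument on $\delta\beta=\hat\beta-\hat\beta^{oracle}$, splitting on whether $\|\delta\beta_M\|_1\le C\sigma/\sqrt n$ with $M=\{k:\delta\beta_k<0\}\subseteq S$.

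The genuine gap is precisely at the point you yourself defer as "the real work": your route, as sketched, does not produce the stated bound $K_{p,\eta}(5/\nu+4/\sqrt\phi)\,s\sigma/\sqrt n$, and no mechanism is supplied that would. In your regime $\|\delta_S\|_1\le\lambda/\sqrt\phi$ the cone estimate gives only $\|\delta_N\|_1\le\nu^{-1}(\lambda/\sqrt\phi+\lambda)$, i.e.\ a multiplicative term $\nu^{-1}\phi^{-1/2}\lambda$, which for small $\nu$ and $\phi$ (and small $s$) is not dominated by $(5/\nu+4/\sqrt\phi)s\lambda$; the assertion that the equalities $n(\hat{\S}\delta)_S=(\mb X^T\varepsilon)_S$ "control $\|\delta\|_1$ directly at the required order" is exactly the step that is missing. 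In the complementary regime you concede you only reach $s\lambda/\phi$, which again is not bounded by $(5/\nu+4/\sqrt\phi)s\lambda$ in general, and "taking the square root of the compatibility bound at the right stage" is named as a goal, not carried out. The missing idea is the paper's decomposition: the $\beta$-min condition is spent not on forcing $\hat\beta_k>0$ on $S$ but on the oracle/restricted-OLS identification, after which the effective noise on the $S$-columns is zero, the case split can be made at the threshold $C\sigma/\sqrt n$ (no $\phi$ in it), yielding the pure $5\nu^{-1}$ term, while the $\phi^{-1/2}$ enters only through the sup-norm bound $\|\hat\beta^{oracle}-\beta^*\|_\infty\le C\sigma/\sqrt{n\phi}$ (multiplied by $s$) and the compatibility step applied to $\delta\beta$ — which is how the $\nu$- and $\phi$-contributions end up additive. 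Without this (or an equivalent device), your argument yields consistency at the same rate in $(s,p,n)$ but with a worse constant, and so does not prove (\ref{eq:toshow}) as stated.
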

A proof is in the appendix. 

The result might be surprising since it implies that non-negative least squares is succeeding in recovering the regression coefficients in an $\ell_1$-sense if $\log(p) s /\sqrt{n} \rightarrow 0$ for $n \rightarrow \infty$, a scaling that requires for general design a lot more regularization in the form of Lasso penalties (or similar). 

The result does not imply exact sign recovery in the sense that the non-zero coefficients equal exactly the set $S$ (and indeed this will in general not be the case), but it implies that the $S$ largest coefficients correspond to the variables in the set $S$.
\begin{corollary}\label{corr:main}
Under the same conditions as Theorem~\ref{theorem:main} and the stronger assumption  that the minimum over all non-zero coefficients is bounded from below by  $ \min_{k\in S }\beta_k \ge 2K_{p,\eta}\sigma \, (5/\nu +4 /\sqrt{\phi}) s/\sqrt{n}  ,$ it  holds with probability at least $1-\eta$ that the indices of the $s$ largest absolute coefficients in $\hat{\beta}$ are identical to the set $S$.
\end{corollary}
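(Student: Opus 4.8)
\textbf{Proof plan for Corollary~\ref{corr:main}.}
The plan is to read off the statement directly from Theorem~\ref{theorem:main} together with a short coordinate separation argument. Write $\Delta := K_{p,\eta}\,\sigma\,(5/\nu + 4/\sqrt{\phi})\,s/\sqrt{n}$ for the quantity on the right-hand side of \eqref{eq:toshow}, so that Theorem~\ref{theorem:main} guarantees that the event $\mathcal{E} := \{\|\hat\beta - \beta^*\|_1 \le \Delta\}$ has probability at least $1-\eta$. The strengthened signal assumption is precisely $\min_{k\in S}\beta^*_k \ge 2\Delta$, and note $\Delta>0$. It suffices to show that on $\mathcal{E}$ one has $|\hat\beta_j| > |\hat\beta_k|$ for every $j\in S$ and every $k\in N=S^c$: this forces the $s$ indices carrying the largest values of $|\hat\beta_k|$ to be exactly the elements of $S$.

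First I would fix $j\in S$ and $k\in N$ and bound the two coordinate errors jointly. Since $j\neq k$, the sum $|\hat\beta_j - \beta^*_j| + |\hat\beta_k - \beta^*_k|$ is a partial sum of $\sum_\ell |\hat\beta_\ell - \beta^*_\ell| = \|\hat\beta - \beta^*\|_1$, hence is at most $\Delta$ on $\mathcal{E}$. Using $\beta^*_k = 0$, the reverse triangle inequality on coordinate $j$, and $\beta^*_j \ge \min_{k\in S}\beta^*_k \ge 2\Delta$,
\[ |\hat\beta_j| - |\hat\beta_k| \ \ge\ \beta^*_j - |\hat\beta_j - \beta^*_j| - |\hat\beta_k| \ =\ \beta^*_j - \big(|\hat\beta_j - \beta^*_j| + |\hat\beta_k - \beta^*_k|\big) \ \ge\ 2\Delta - \Delta \ =\ \Delta \ >\ 0 . \]
As $j\in S$ and $k\in N$ were arbitrary, on $\mathcal{E}$ every coordinate of $\hat\beta$ indexed by $S$ strictly dominates in absolute value every coordinate indexed by $N$, which is the claim.

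There is essentially no hard step here; the only point to be careful about is that the joint error $|\hat\beta_j - \beta^*_j| + |\hat\beta_k - \beta^*_k|$ on a pair of distinct coordinates is controlled by the single global $\ell_1$-bound of Theorem~\ref{theorem:main}, and that the factor $2$ in the strengthened lower bound on $\min_{k\in S}\beta^*_k$ is exactly what is needed to absorb this error twice — once for the downward perturbation of a true coefficient in $S$ and once for the upward perturbation of a zero coefficient in $N$. One could instead route the argument through $\|\hat\beta - \beta^*\|_\infty \le \|\hat\beta - \beta^*\|_1 \le \Delta$, but keeping the two coordinates together as above is marginally sharper and makes the strictness of the separation transparent.
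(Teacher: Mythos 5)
Your proposal is correct and follows essentially the same route as the paper, which proves the corollary in one line by noting that the $\ell_1$-bound of Theorem~\ref{theorem:main} implies the same bound in the supremum norm and then using the $2\Delta$ lower bound on the nonzero coefficients. Your pairwise variant (bounding $|\hat\beta_j-\beta^*_j|+|\hat\beta_k-\beta^*_k|$ jointly by the single $\ell_1$-budget) is a minor sharpening, since it yields strict separation between coordinates in $S$ and in $N$ rather than the non-strict inequality one gets from applying the sup-norm bound to each coordinate separately, but the underlying idea is identical.
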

This follows immediately from Theorem~\ref{theorem:main} since the $\ell_1$-bound on the difference between $\hat{\beta}$ and $\beta^*$ implies the same bound in the supremum-norm.

The bound in Theorem~\ref{theorem:main} also implies a bound on the prediction error.
\begin{theorem}\label{theorem:pred}
Under the same conditions as Theorem~\ref{theorem:main}, with probability at least $1-\eta$ for any $0<\eta<1/3$,
\begin{eqnarray*}
\| \mb X (\hat{\beta}^{oracle} - \hat{\beta}) \|_2^2 \; \le \;   2 K_{p,\eta}^2\sigma^2 \,( 5/\nu + 2 /\sqrt{\phi})\, s .
\end{eqnarray*}
\end{theorem}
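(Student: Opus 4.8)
Here $\hat\beta^{oracle}$ is the ordinary least-squares fit using only the variables in $S$, so $\mb X\hat\beta^{oracle}=\mb P\mb Y$, where $\mb P$ is the orthogonal projection of $\mathbb R^n$ onto the column span of $\mb X_S$ (the submatrix of $\mb X$ formed by the columns in $S$). Since $\mb X\beta^*=\mb X_S\beta^*_S$ lies in that span, $\mb X\hat\beta^{oracle}-\mb X\beta^*=\mb P\varepsilon$. The plan is to convert the defining optimality of the NNLS estimator $\hat\beta$ into a ``basic inequality'' centred at $\hat\beta^{oracle}$ rather than at $\beta^*$, by working in coordinates adapted to the splitting $\mathbb R^n=\mathrm{span}(\mb X_S)\oplus\mathrm{span}(\mb X_S)^\perp$.

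First I would record the exact identity
\[
\|\mb X(\hat\beta^{oracle}-\hat\beta)\|_2^2=\|\mb P\hat r\|_2^2+\|(\mb I-\mb P)\mb X\hat\beta\|_2^2 ,
\]
where $\hat r=\mb Y-\mb X\hat\beta$ and one uses $(\mb I-\mb P)\mb X_S=0$. Then I would expand the NNLS optimality bound $\|\hat r\|_2^2\le\|\mb Y-\mb X\beta^*\|_2^2=\|\varepsilon\|_2^2$ (valid since $\beta^*\ge 0$ is feasible) in the same two orthogonal pieces, using $(\mb I-\mb P)\hat r=(\mb I-\mb P)\varepsilon-(\mb I-\mb P)\mb X\hat\beta$: the term $\|(\mb I-\mb P)\varepsilon\|_2^2$ cancels between the two sides, while $\|(\mb I-\mb P)\mb X\hat\beta\|_2^2$ recombines with $\|\mb P\hat r\|_2^2$ into the left-hand side above, leaving the key inequality
\[
\|\mb X(\hat\beta^{oracle}-\hat\beta)\|_2^2\;\le\;\|\mb P\varepsilon\|_2^2\;+\;2\,\langle(\mb I-\mb P)\varepsilon,\;\mb X_N\hat\beta_N\rangle .
\]

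It then remains to bound the two right-hand terms on a common event of probability at least $1-\eta$. For the oracle-noise term, $\hat\beta^{oracle}-\beta^*$ is supported on $S$, hence automatically lies in $\mathcal R(L,S)$, so the Compatibility Condition gives $\|\mb P\varepsilon\|_2^2=\|\mb X(\hat\beta^{oracle}-\beta^*)\|_2^2\ge n\phi\,\|\hat\beta^{oracle}-\beta^*\|_1^2/s$; comparing this with $\|\mb P\varepsilon\|_2^2=\langle\varepsilon,\mb X(\hat\beta^{oracle}-\beta^*)\rangle\le\|\mb X^T\varepsilon\|_\infty\,\|\hat\beta^{oracle}-\beta^*\|_1$ and invoking the Gaussian maximal bound $\|\mb X^T\varepsilon\|_\infty\le K_{p,\eta}\sigma\sqrt n$ (the same event of probability $\ge 1-\eta$ used in the proof of Theorem~\ref{theorem:main}) bounds $\|\mb P\varepsilon\|_2^2$ by $K_{p,\eta}^2\sigma^2 s/\phi$. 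For the cross term, the non-negativity $\hat\beta_N\ge 0$ is essential: $\langle(\mb I-\mb P)\varepsilon,\mb X_N\hat\beta_N\rangle\le\|\hat\beta_N\|_1\,\max_{k\in N}\langle(\mb I-\mb P)\varepsilon,\mb X_k\rangle$; each $\langle(\mb I-\mb P)\varepsilon,\mb X_k\rangle$ is centred Gaussian with variance $\sigma^2\|(\mb I-\mb P)\mb X_k\|_2^2\le\sigma^2 n$, so a one-sided maximal inequality over the $p-s$ indices in $N$ bounds the maximum by $K_{p,\eta}\sigma\sqrt n$, and $\|\hat\beta_N\|_1=\|(\hat\beta-\beta^*)_N\|_1\le\|\hat\beta-\beta^*\|_1$ is controlled by Theorem~\ref{theorem:main}. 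Substituting these bounds into the key inequality yields a bound of the claimed form.

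The main obstacle is not any single estimate but the accounting: making the two Gaussian bounds above and the $\ell_1$-bound of Theorem~\ref{theorem:main} hold simultaneously on one event of probability $1-\eta$ without a union bound that would inflate $K_{p,\eta}$, and tracking the constants closely enough to land on $5/\nu+2/\sqrt\phi$. This is why one re-uses the very same Gaussian event and the same split of $\|\hat\beta-\beta^*\|_1$ into its $S$- and $N$-parts that already appear in the proof of Theorem~\ref{theorem:main}, and why exploiting the one-sidedness coming from $\hat\beta_N\ge 0$ (which halves the relevant tail probability) matters for not losing a factor two.
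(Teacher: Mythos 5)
Your projection decomposition is correct as far as it goes: on the event of Lemma~\ref{lemma:equiv} one indeed has $\mb X\hat{\beta}^{oracle}=\mb P\mb Y$ (note that the paper's $\hat{\beta}^{oracle}$ is the \emph{sign-constrained} least squares on $S$, not the OLS; they coincide only on that event, which the beta-min condition supplies), and your key inequality $\|\mb X(\hat{\beta}^{oracle}-\hat{\beta})\|_2^2\le\|\mb P\varepsilon\|_2^2+2\langle(\mb I-\mb P)\varepsilon,\mb X_N\hat{\beta}_N\rangle$ follows. But this is a genuinely different route from the paper's, and as written it does not prove the theorem as stated, for two concrete reasons. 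First, constants: inserting Theorem~\ref{theorem:main}'s bound $\|\hat{\beta}-\beta^*\|_1\le K_{p,\eta}\sigma(5/\nu+4/\sqrt{\phi})s/\sqrt{n}$ into the cross term and adding your bound $\|\mb P\varepsilon\|_2^2\le K_{p,\eta}^2\sigma^2 s/\phi$ gives $K_{p,\eta}^2\sigma^2 s\,(10/\nu+8/\sqrt{\phi}+1/\phi)$, strictly larger than the claimed $K_{p,\eta}^2\sigma^2 s\,(10/\nu+4/\sqrt{\phi})$; the surplus $4/\sqrt{\phi}+1/\phi$ cannot be absorbed. Second, probability: your bound on $\|\mb P\varepsilon\|_2^2$ requires the two-sided event $\|\mb X^T\varepsilon\|_\infty\le K_{p,\eta}\sigma\sqrt{n}$, which is \emph{not} among the events behind Theorem~\ref{theorem:main} (those are the one-sided bounds on $((\mb I-\mb P)\varepsilon)^T\mb X_k$ for $k\in N$ plus the $s$ events of Lemma~\ref{lemma:equiv}, already exhausting the budget $(p+s)(1-\Phi(K_{p,\eta}))\le\eta$). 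Adding roughly $2p$ further half-space events pushes the failure probability above $\eta$ for the paper's choice of $K_{p,\eta}$. You correctly flag this accounting as the main obstacle, but ``re-using the same Gaussian event'' does not resolve it, because it is not the same event.

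The paper sidesteps both problems by never comparing $\hat{\beta}$ to $\beta^*$ in the prediction argument: since $\delta\beta=\hat{\beta}-\hat{\beta}^{oracle}$ solves the reparametrized problem (\ref{eq:deltabetaopt}) and $\gamma\equiv 0$ is feasible there, one gets directly the basic inequality (\ref{eq:h1}), $n^{-1}\|\mb X\delta\beta\|_2^2=\delta\beta^T\hat{\S}\,\delta\beta\le 2K_{p,\eta}\sigma\|\delta\beta_{M^c}\|_1/\sqrt{n}$, valid on exactly the events already used in Lemma~\ref{lemma:main}; the term $\|\mb P\varepsilon\|_2^2$ never appears. It then plugs in the \emph{intermediate} bound (\ref{eq:toshowlem}) on $\|\delta\beta\|_1$ (the $\ell_1$-distance of $\hat{\beta}$ to $\hat{\beta}^{oracle}$, with constant $5/\nu+2/\sqrt{\phi}$), not Theorem~\ref{theorem:main}'s bound on $\|\hat{\beta}-\beta^*\|_1$ with constant $5/\nu+4/\sqrt{\phi}$, which yields precisely $2K_{p,\eta}^2\sigma^2(5/\nu+2/\sqrt{\phi})s$ with probability $1-\eta$. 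To rescue your decomposition you would have to both enlarge $K_{p,\eta}$ to cover the extra Gaussian event and accept the worse constant, i.e.\ prove a weaker statement; the missing idea is to exploit the optimality of $\hat{\beta}$ relative to $\hat{\beta}^{oracle}$ rather than relative to $\beta^*$.
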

A proof is given in the appendix. 
The mean squared error, introduced by using NNLS instead of the oracle estimator is thus proportional to $\log(p)^2 s/n$.
The result implies asymptotically vanishing prediction error if $s\log(p)^2/n\rightarrow 0$ for $n\rightarrow n$.

\begin{figure}
\begin{center}
\includegraphics[width=0.49\textwidth]{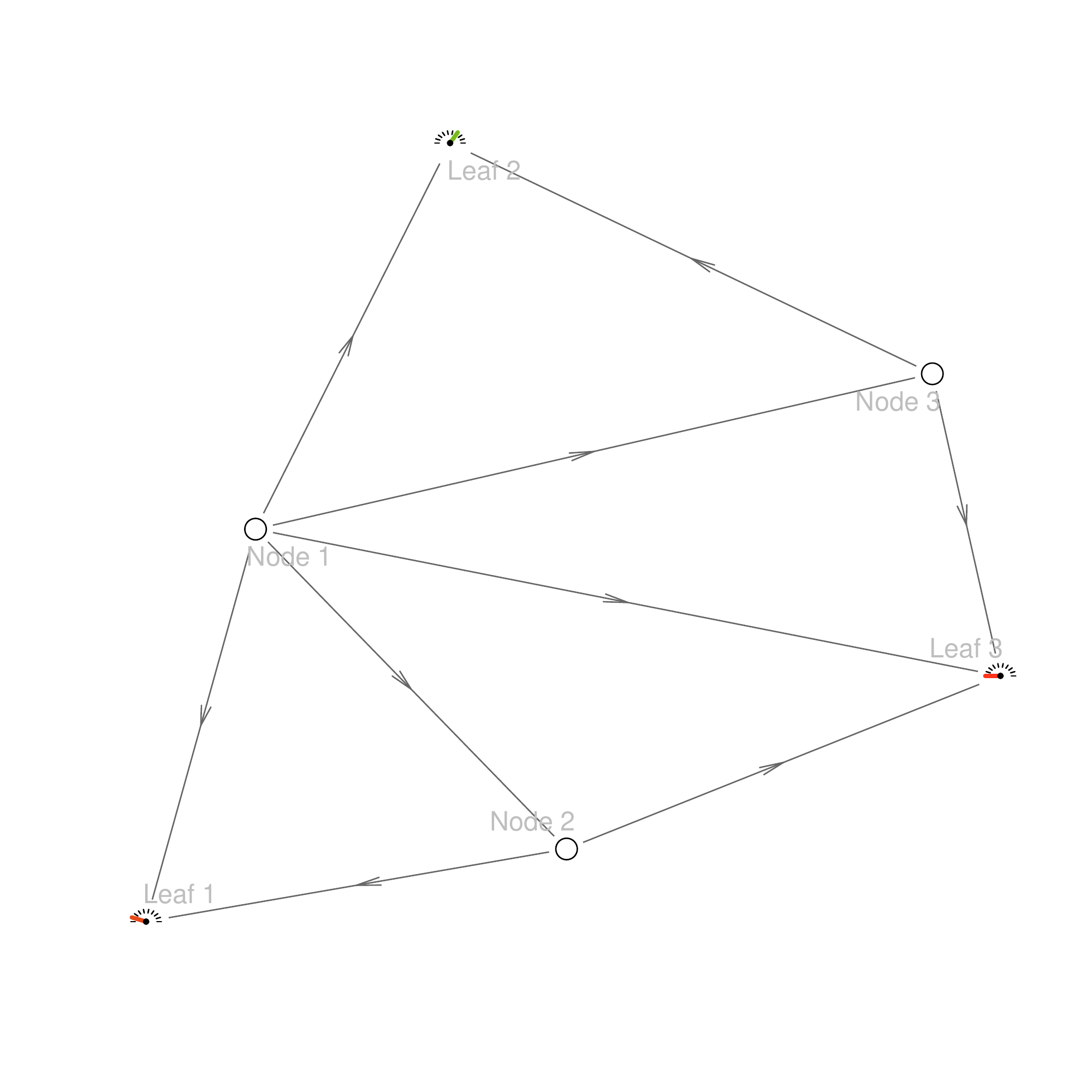}
\includegraphics[width=0.49\textwidth]{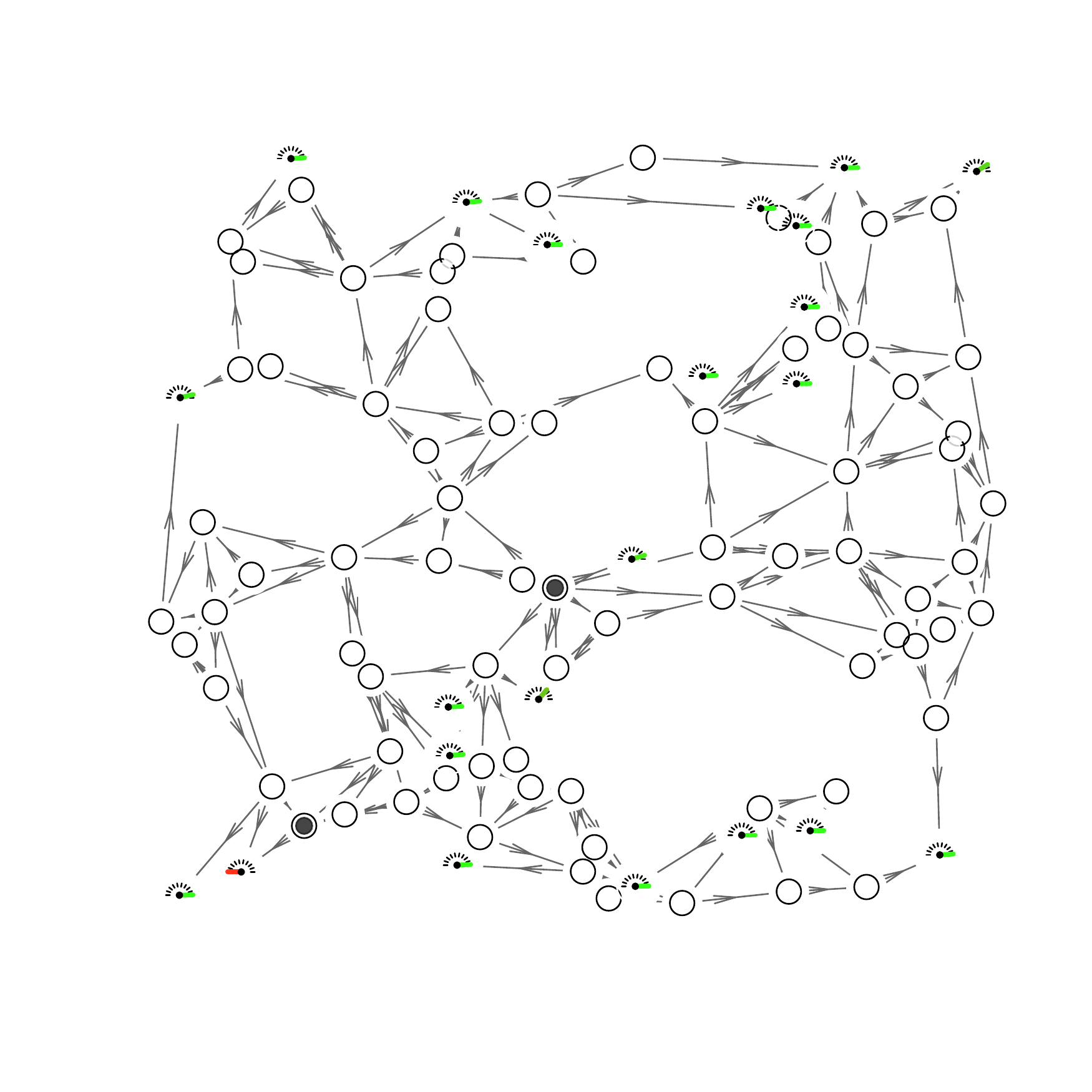}
\caption{\label{fig:1}  Left: A network with three internal nodes and three leaf nodes. The (unobservable) losses at the internal nodes are (10,10,0), meaning that the first two nodes lead to a loss rate of 10 and the third node is not leading to any losses. The observations of the loss rates at the leaf nodes are then (8,3,9). Using the observations at the leaf nodes and knowledge of the topology, NNLS can correctly identify the two first nodes as responsible for the losses. Right: A network with 78 internal nodes and 22 leaf nodes. Two of the internal nodes have a positive loss (marked with a dot) and the observations at the leaf nodes are again sufficient to pinpoint the (unknown) location of the two nodes using NNLS estimation.}
\end{center}
\end{figure}

\section{Numerical Results}
The results above imply that NNLS can be very effective if (a) the sign of regression coefficients is known or can easily be estimated and (b) the \emph{Positive Eigenvalue Condition} holds. 
\emph{Network tomography} is a good example \citep{castro2004network}. For others, including image analysis and applications in signal processing,  see \citet{slawskisparse}.  There are different aspects of network tomography, including origin-destimation matrix estimation and link-level network tomography; see \citet{castro2004network} for a good overview of the statistical aspects and 
\citet{xi2006estimating} and \citet{lawrence2006network} for a discussion of active tomography in the context of link-level analysis. We will focus on one aspect of the link-level network tomography. The network consists of nodes arranged in a directed acyclic graph (or sometimes as a special case a tree) and measurements can be taken at the leaf nodes. These measurements are used to infer the state of all the nodes in the network. 
In a communication network, the measurements can be the delay or loss rate of packages, in a transport network (such as water distributions networks) it can be the shortfall of the flow rate compared to the expected rate. Since the network topology is assumed to be known, the measurements consist typically of noise plus a linear combination of the internal and unobservable states of the nodes in the network. If a node in the network has a loss (be it in the form of delaying packages or loss of water flow), it will have a linear effect on all leaf nodes that are descendants of the node in the directed acyclic graph. 

Figure~\ref{fig:1} shows a toy example. Imagining a flow passing through the tree from the internal nodes to the leaf nodes, the entry $\mb X_{i,j}$ is the proportion of flow in node $j$ that reaches leaf node $i$ if flow is divided equally among all outgoing edges in each node of the tree.  Three internal nodes have loss rates $(\beta_1,\beta_2,\beta_3)=(10,10,0)$. The loss rates $\mb Y = (Y_1,Y_2,Y_3)$ at the three leaf nodes are then given by $\mb Y=\mb X\beta + \varepsilon$ for some i.i.d.\ noise $\varepsilon$ and 
\[ \mb X = \left( \begin{array} {ccc}    0.3 & 0.5 & 0 \\ 0.3&0&0.5 \\ 0.4&0.5&0.5 \end{array} \right) .\]
A positivity constraint on the coefficient vectors is clearly appropriate since there will in general not be a negative loss at internal nodes (for example no unexected \emph{gain} of water in a distribution network).
In the noiseless case, the NNLS solution recovers exactly the internal states $(10,10,0)$ and thus identifies correctly  the first two nodes as responsible for the loss of the flow rate in all three leaf nodes. In this simple example, the number of leaf nodes is equal to the number of internal nodes and ordinary least squares would also work in the noiseless case. Least squares clearly ceases to be useful once the number of internal nodes exceeds the number of leaf nodes.
Note that, contrary to the previous literature (for example \citet{castro2004network,lawrence2006network}) we do not attempt to  fit a stochastic model to the observations. We are merely trying to directly estimate the current internal state $\beta$ of the nodes in the network as accurately as possible. 

The theory suggests that a non-negativity constraint can already be very powerful under certain constraints on the design matrix. The main condition is the \emph{Positive Eigenvalue Condition}. In our simple network tomography example, it is obvious that all entries in $\mb X$ are positive and the same is hence true for $\hat{\S}=n^{-1} \mb X^T \mb X $. Entries in $\mb X$ correspond to the amount of loss (delay of packages or reduction in flow rate) in a leaf node caused by a specific loss at an internal node and is non-zero if and only if there is a connection between the internal and the leaf node. Suppose that all non-zero entries in $\mb X$ have entries at least as large as $\delta$ for some $\delta>0$. 
Suppose further that we can group all internal nodes into $B$ blocks such that the internal nodes within a block share at least one leaf node to which they all connect. The \emph{Positive Eigenvalue Condition} is then fulfilled with value $\delta^2/B$; see Example III in the discussion of the condition. 

The theory seems to show that under these conditions the NNLS-regularization is effective. To test this, we examine the effect of placing an additional $\ell_1$-constraint on the coefficient by computing 
 \begin{equation} \label{eq:nnlsl1}      \beta^\lambda \; :=\; \mbox{argmin}_{\beta}  \; \|\mb  Y - \mb X \beta \|^2_2 \quad \mbox{such that   } \min_k\beta_k \ge 0 \mbox{   and   } \|\beta\|_1 \le \lambda . \end{equation}
Let $\hat{\beta}$ be again the NNLS-solution defined in (\ref{eq:nnls}).  It is obvious that $\beta^\lambda \equiv \hat{\beta}$ for all $\lambda\ge \lambda_{\max}$ for $\lambda_{\max}:=\|\hat{\beta}\|_1$.

\begin{figure}
\begin{center}
\includegraphics[width=0.7\textwidth]{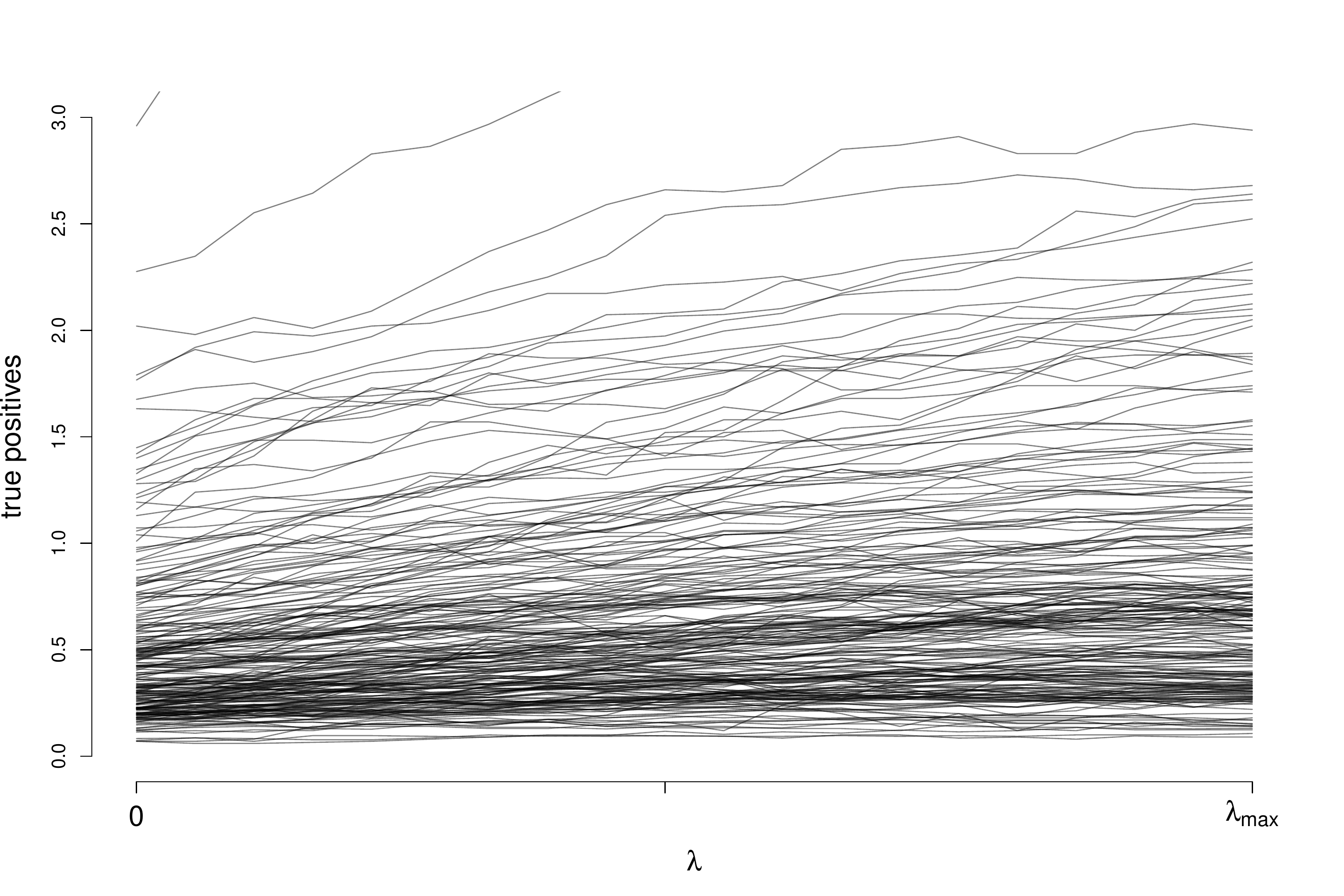}
\caption{\label{fig:lambda}  The average number of correctly identified internal nodes with  a positive loss under 1000 different scenarios with an additional $\ell_1$-constraint as in (\ref{eq:nnlsl1}). The NNLS solution corresponds to $\lambda=\lambda_{\max}$ and is seen to be in general superior to the solutions under additional shrinkage. }
\end{center}
\end{figure}

We generate networks of similar type as the ones shown in Figure~\ref{fig:1}. The number $N$ of total nodes is chosen for each of 1000 simulations uniformly out the set $\{25,50,100,200,400\}$. Nodes are distributed uniformly on the area $[-1,1]^2$
and numbered in order of their Euclidean distance from the origin. Starting with the first node $k=1$ closest to the origin, edges are drawn between it and its $K$ nearest neighbours with a larger ordering number (where $K$ is drawn uniformly from the set $\{5,10,20\}$). When drawing edges at node $k=1,\ldots,N-1$, they are deleted with probability $\nu$  (where $\nu$ is drawn uniformly from the set $\{.2,.4,.6,.8,1\}$) or when the edge would cross a previously drawn edge. Imagining again a flow passing through the tree from the internal nodes to the leaf nodes, the entry $\mb X_{i,j}$ is the proportion of flow in node $j$ that reaches leaf node $i$ if flow is divided equally among all outgoing edges in each node of the tree. For each of the 1000 simulations, we draw a single graph from the parameters as described above and also draw the noise variance uniformly from the set $\{0,0.125,0.25,0.5,1,2,4\}$ and a number $s$ of non-zero entries in $\beta$ (corresponding to nodes with a delay or loss), where $s$ is drawn uniformly from the set $\{2,5,10\}$. The $s$ non-zero entries from $\beta$ are generated independently as the absolute value of a standard-normal random variable. For each such setting, we simulate 50 times the vector $\mb Y$ and reconstruct with $\hat{\beta}^\lambda$ as in (\ref{eq:nnlsl1}) for an evenly spaced grid of 20 points between $\lambda=0$ and $\lambda=\lambda_{\max}$, the NNLS solution. Nodes are put in decreasing order of the reconstructed value $\hat{\beta}^\lambda$. We record the first entry in the re-ordered vector $\hat{\beta}^\lambda$ that corresponds to a false positive (a zero entry in the equally re-ordered  vector $\beta$) and call the number of true positives the number of values of $\hat{\beta}^\lambda$ with larger value than the first false positive.

Figure \ref{fig:lambda} shows the average number of true positives as a function of $\lambda$. Each line corresponds to the average value over all 50 simulations in a given scenario. For nearly all scenarios there is no benefit in placing an additional $\ell_1$-penalty on the coefficients. The NNLS solution is thus a very good and simple estimator in these settings, as expected from theory.  Additional regularization by an $\ell_1$-penalty does not seem to improve results.

\section{Discussion}
We have shown that non-negative (or sign-constrained) least squares can be an effective regularization technique for sparse high-dimensional data  under two conditions: (a) the data fulfil the so-called \emph{Positive Eigenvalue Condition}, which is easy to check for a given dataset, and (b) the sign of the coefficients is known or can easily be estimated. 
If the conditions hold, NNLS can recover the correct sparsity pattern in the absence of any further shrinkage, as long as $\log(p) s/n\rightarrow 0$ for $n\rightarrow\infty$, where $p$ is the number of variables, $s$ the number of non-zero variables in the optimal regression vector and $n$ is sample size. 
We have shown network tomography as an example where the sign of regression coefficients is known a priori and the design condition is fulfilled automatically, at least approximately. In other examples the sign can be estimated by an initial estimator. An attractive feature of NNLS is that it does not require any tuning parameter beyond the choice of the signs of the individual regression coefficients. Despite its simplicity, it can remarkably accurate for high-dimensional regression.

\section{Appendix: Proofs}

\subsection{Proof of Theorem~\ref{theorem:main}}

First, for any $C>0$, $1-\Phi(C) \le (2\pi)^{-1/2} C^{-1} \exp(-C^2/2)$. Choosing $C^2=K^2_{p,\eta} =2 \log(\frac{\sqrt 2 p}{\sqrt{\pi} \eta})$, it follows with $\eta<1/3$ and hence $C \ge 1$  that $1-\Phi(C)  \le \eta/(2p) $.  Thus $1-(p+s)(1-\Phi(C))\ge 1-(2p) (1-\Phi(C)) \ge 1-\eta$ and the results follow hence from Lemma~\ref{lemma:main}.

\subsection{Proof of Theorem~\ref{theorem:pred}}
Define the oracle non-negative least squares solution as 
\begin{equation} \label{eq:nnlsopt}      \hat{\beta}^{oracle} \; :=\; \mbox{argmin}_{\beta}  \| \mb Y -\mb X \beta\|_2^2  \quad \mbox{such that} \;\; \min_k\beta_k \ge 0  \;\;\mbox{and}\;\; \beta_N \equiv 0 , \end{equation}
and let $\delta \beta =  \hat{\beta} - \hat{\beta}^{oracle} $. 

Let $M$ be the set $M \; :=\; \{k: \delta\beta_k <0\} $.
Using Equation (\ref{eq:h1}) in the proof of Lemma~\ref{lemma:main},  it follows that, with probability at least $1-(p+s)(1-\Phi(C))$,
\[  \delta\beta^T \hat{\S} \delta \beta \;\le\;  2 C \sigma \|\delta \beta_{M^c}\|_1/\sqrt{n}  \]
and, using $\| \delta\beta_{M^c}\|_1\le\|\delta\beta\|_1$ and the bound in (\ref{eq:toshowlem}) for the latter quantity, it holds with probability at least $1-(p+s)(1-\Phi(C))$ that 
\[ \delta\beta^T \hat{\S} \delta \beta \;\le\;  2 C^2 \sigma^2 (5\nu^{-1} +2\phi^{-1/2}) \frac s n.\]    
Using again $C^2 = K^2_{p,\eta}= 2 \log(\frac{\sqrt 2 p}{\sqrt{\pi} \eta})$, the claim follows.
\subsection{Lemmata}

\begin{lemma}\label{lemma:main}
Assume that the \emph{Positive Eigenvalue Condition} holds with $\nu>0$. Choose any $C>0$.
Assume that the compatibility condition holds with $\phi>0$  for $L=4\nu^{-1}$ and $\min_{k\in S }\beta_k > C \sigma/\sqrt{n \phi}$. It then holds with probability at least $1-(p+s)(1-\Phi(C))$ that  
\[ \|\hat{\beta}-\beta^*\|_1 \; \le \;  C\sigma (5/\nu + 4/\sqrt{\phi})   \frac{s }{\sqrt{n}}  .\] 
\end{lemma}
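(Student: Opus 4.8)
\textbf{Proof proposal for Lemma~\ref{lemma:main}.}

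The plan is to start from the basic inequality for the constrained least squares estimator. Since $\hat\beta$ minimizes $\|\mb Y - \mb X\beta\|_2^2$ over the non-negative orthant and $\beta^*$ is feasible (it has non-negative entries), writing $\mb Y = \mb X\beta^* + \varepsilon$ and expanding gives the standard starting point
\[ \tfrac1n\|\mb X(\hat\beta - \beta^*)\|_2^2 \;\le\; \tfrac2n \varepsilon^T \mb X(\hat\beta-\beta^*). \]
Set $\gamma = \hat\beta - \beta^*$. On the right-hand side I would control $\tfrac1n\varepsilon^T\mb X\gamma$ by $\max_k |\tfrac1n \varepsilon^T \mb X_k| \cdot \|\gamma\|_1$; because the columns have $\ell_2$-norm $\sqrt n$, each $\tfrac1n\varepsilon^T\mb X_k$ is $\mathcal N(0,\sigma^2/n)$, so on an event of probability at least $1 - 2p(1-\Phi(C))$ all these inner products are bounded by $C\sigma/\sqrt n$. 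This yields the ``noise event'' inequality $\gamma^T\hat\S\gamma \le 2C\sigma\|\gamma\|_1/\sqrt n$, which I expect is exactly equation~(\ref{eq:h1}) referenced later in the proof of Theorem~\ref{theorem:pred}.

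The crux is then to split $\|\gamma\|_1 = \|\gamma_S\|_1 + \|\gamma_N\|_1$ and show $\gamma$ lies in a cone to which the compatibility condition applies, \emph{and} to exploit the sign constraint through the Positive Eigenvalue Condition. The key observation is that $\gamma_N = \hat\beta_N \ge 0$ coordinatewise, since $\beta^*_N \equiv 0$. I would like to argue two things in tandem: first, on the part of $\gamma$ living on $N$ (where it is non-negative) the quadratic form $\gamma^T\hat\S\gamma$ is bounded below in terms of $\|\gamma_N\|_1^2$ via $\phi_{pos}^2$ --- but this needs care because $\gamma$ is not itself non-negative on $S$. The trick I anticipate is to use the assumed lower bound $\min_{k\in S}\beta^*_k > C\sigma/\sqrt{n\phi}$ together with the magnitude of the $\ell_1$-error to guarantee $\hat\beta_S > 0$ as well, so that in fact $\hat\beta \ge 0$ entrywise is not automatic but $\hat\beta$ stays in the interior on $S$; alternatively, one decomposes and applies $\phi_{pos}^2$ to a carefully chosen non-negative vector derived from $\gamma$. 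Combining the upper bound $\gamma^T\hat\S\gamma \le 2C\sigma\|\gamma\|_1/\sqrt n$ with a lower bound of the form $\gamma^T\hat\S\gamma \ge \nu\|\gamma_N\|_1^2 / (\text{something})$ should force $\|\gamma_N\|_1$ to be small relative to $\|\gamma_S\|_1$, establishing membership in $\mathcal R(L,S)$ with $L = 4/\nu$.

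Once $\gamma \in \mathcal R(L,S)$, the compatibility condition gives $\gamma^T\hat\S\gamma \ge \phi\|\gamma_S\|_1^2 / s$, hence combined with the noise bound $\phi\|\gamma_S\|_1^2/s \le 2C\sigma(\|\gamma_S\|_1 + \|\gamma_N\|_1)/\sqrt n \le 2C\sigma(1+L^{-1}\cdot\text{stuff})\|\gamma_S\|_1/\sqrt n$ --- wait, more precisely using $\|\gamma_N\|_1 \le L\|\gamma_S\|_1$ one gets $\phi\|\gamma_S\|_1^2/s \lesssim C\sigma\|\gamma_S\|_1/\sqrt n$, so $\|\gamma_S\|_1 \lesssim C\sigma s/(\sqrt n \phi)$, and then $\|\gamma\|_1 \le (1+L)\|\gamma_S\|_1$ delivers the stated bound after bookkeeping the constants to get $5/\nu + 4/\sqrt\phi$. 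I would also need the analogous lower-bound argument via $\phi_{pos}^2$ to produce the $5/\nu$ term, suggesting the final bound is the sum of two separate estimates --- one from the positive-eigenvalue route bounding $\|\gamma_N\|_1$ (giving the $\nu$-dependent part) and one from the compatibility route bounding $\|\gamma_S\|_1$ (giving the $\sqrt\phi$-dependent part). The main obstacle I foresee is the interplay in the middle step: correctly justifying that the sign constraint can be leveraged on $\gamma$ despite $\gamma$ having mixed signs on $S$, which is presumably where the lower bound $\min_{k\in S}\beta^*_k > C\sigma/\sqrt{n\phi}$ enters to ensure the estimator does not hit the boundary on $S$, so that the first-order (KKT) conditions there are the unconstrained normal equations. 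Getting the constants to collapse exactly to $(5/\nu + 4/\sqrt\phi)$ will require choosing $L = 4/\nu$ precisely and tracking the factor-of-two slack introduced by $\min$ over the cone.
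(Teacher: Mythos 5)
There is a genuine gap at the central step, and the device you lean on to close it does not work. Because you start from the basic inequality against $\beta^*$, your noise term is $\varepsilon^T\mb X\gamma$ and the resulting bound is $\gamma^T\hat{\S}\gamma\le 2C\sigma\|\gamma\|_1/\sqrt n$ with the \emph{full} $\ell_1$-norm on the right-hand side; this is not equation~(\ref{eq:h1}). In the paper the comparison is not with $\beta^*$ but with the oracle NNLS estimator $\hat{\beta}^{oracle}$ of (\ref{eq:nnlsopt}), and the analysis is of $\delta\beta=\hat{\beta}-\hat{\beta}^{oracle}$, which solves (\ref{eq:deltabetaopt}). The beta-min condition $\min_{k\in S}\beta^*_k>C\sigma/\sqrt{n\phi}$ enters only through Lemma~\ref{lemma:equiv}: it guarantees that the OLS restricted to $S$ is already non-negative, hence equals $\hat{\beta}^{oracle}$, so the residual $\mb R=\mb Y-\mb X\hat{\beta}^{oracle}$ is \emph{exactly orthogonal} to $\mb X_k$ for $k\in S$ and bounded by $C\sigma\sqrt n$ on $N$ (Lemma~\ref{lemma:grad}). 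Since the negative coordinates $M=\{k:\delta\beta_k<0\}$ of $\delta\beta$ lie in $S$, the cross term $\mb R^T\mb X\delta\beta$ involves only $\|\delta\beta_{M^c}\|_1$; that is what makes (\ref{eq:h1}) divisible by $\|\delta\beta_{M^c}\|_1$ after the Positive Eigenvalue lower bound, and what produces the clean dichotomy (either $\|\delta\beta_M\|_1\le C\sigma/\sqrt n$, giving the $5/\nu$ term, or $\delta\beta\in\mathcal R(4\nu^{-1},S)$ and compatibility applies). The final bound is then $\|\hat{\beta}^{oracle}-\beta^*\|_1+\|\delta\beta\|_1$, not a split of $\|\gamma\|_1$ into its $S$ and $N$ parts.

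Your primary suggestion for the mixed-sign difficulty --- use the beta-min condition to conclude $\hat{\beta}_S>0$ so that the KKT conditions on $S$ reduce to unconstrained normal equations --- fails: the threshold $C\sigma/\sqrt{n\phi}$ is far too small to dominate the deviation of $\hat{\beta}_S$ from $\beta^*_S$, which is only controlled at the level $C\sigma s(5/\nu+4/\sqrt\phi)/\sqrt n$ that you are in the process of proving (compare the much stronger beta-min needed in Corollary~\ref{corr:main}); the argument is both quantitatively insufficient and circular. Your fallback --- apply $\phi^2_{pos}$ to a non-negative piece of $\gamma$ and control the cross terms by $\|\gamma_M\|_1\|\gamma_{M^c}\|_1$ using $|\hat{\S}_{ij}|\le 1$ --- is the right idea, and in fact a direct version against $\beta^*$ can be pushed through by a case analysis (and then the beta-min condition is not even needed); but because your noise bound carries the full $\|\gamma\|_1$, the bookkeeping comes out different from the lemma as stated: roughly a constant of order $10/\nu+2/\phi$ and probability $1-2p(1-\Phi(C))$ rather than $5/\nu+4/\sqrt\phi$ and $1-(p+s)(1-\Phi(C))$. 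As written, your proposal neither commits to this decomposition nor carries out the case analysis, and the decisive idea it misses relative to the paper is the comparison with the oracle estimator, which is precisely what eliminates the negative coordinates from the noise term.
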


\begin{proof}By the definition (\ref{eq:nnls}) of $\hat{\beta}$ and definition (\ref{eq:nnlsopt}) of $\hat{\beta}^{oracle}$,
\begin{equation}\label{eq:deltabetaopt}    \delta \beta \;=\; \mbox{argmin}_\gamma \; \|\mb Y - \mb X \hat{\beta}^{oracle}  -\mb X \gamma\|_2^2 \;\;\mbox{   such that   } \;\;  \gamma _k  \ge -\hat{\beta}^{oracle}_k \; \mbox{for all } k=1,\ldots,p  . \end{equation}
The bound for $\|\hat{\beta}-\beta^*\|_1$ follows as 
$ \|\hat{\beta}-\beta^*\|_1 \le \|\hat{\beta}^{oracle}- \beta^*\|_1 +  \|\delta \beta\|_1 .$
Using Lemma \ref{lemma:equiv}, it holds with probability exceeding $1-(p+s)(1-\Phi(C))$,
\begin{equation}\label{eq:l3} \|\hat{\beta}^{oracle}- \beta^*\|_1 \le 2C \sigma \phi^{-1/2} \frac{s}{ \sqrt{ n}} ,\end{equation}
and it thus remains to be shown that, if (\ref{eq:l3}) is fulfilled, also
\begin{equation}\label{eq:toshowlem} \|\delta \beta \|_1 \le  C\sigma (5\nu^{-1} +2 \phi^{-1/2}) \frac{s}{\sqrt{n}} .\end{equation}

Let $\mb R= \mb Y - \mb X \hat{\beta}^{oracle}$.  Since $\delta \beta\equiv 0$ is a feasible solution in (\ref{eq:deltabetaopt}),  we have that 
\[  \delta\beta^T \mb X^T \mb X \delta \beta - 2 \mb R^T \mb X \delta \beta \le 0.\]

Let 
\begin{equation}\label{eq:M}
M \; :=\; \{k: \delta\beta_k <0\} \end{equation}
By the definition of the estimator $M\subseteq S$ and $N \subseteq M^c$.
By Lemma \ref{lemma:grad}, with probability at least $1-p(1-\Phi(C))$,
\[ \max_{k\in N}\; \mb R^T \mb X_k \;\le\; C \sigma \sqrt{n}  .\]
By Lemma \ref{lemma:equiv}, with probability at least $1-s(1-\Phi(C))$,
$ \mb R^T \mb X_k =0$ for all $k\in S$.
Hence, taken together, with probability at least $1-(p+s)(1-\Phi(C))$, 
\[   \mb R^T \mb X \delta \beta \le \big(\max_{k\in M^c}\; \mb R^T \mb X_k\big) \| \delta \beta_{M^c}\|_1 \;\le\; C \sigma \sqrt{n}  \|\delta \beta_{M^c}\|_1   \]
and thus
\begin{equation} \label{eq:h1}  \delta\beta^T \hat{\S} \delta \beta \;\le\;  2 C \sigma \|\delta \beta_{M^c}\|_1/\sqrt{n} .\end{equation}
Now, \begin{eqnarray}  \delta\beta^T \hat{\S} \delta \beta &=& \delta\beta_M^T \hat{\S} \delta \beta_M + \delta\beta_{M^c}^T \hat{\S} \delta \beta_{M^c}  - 2\sum_{i\in M, j\in M^c} \hat{\S}_{i,j} \delta \beta_i \delta \beta_j \nonumber \\ &\ge &  \delta\beta_{M^c}^T \hat{\S} \delta \beta_{M^c}  -2 \|\delta\beta_M\|_1 \|\delta \beta_{M^c}\|_1 \nonumber \\ &\ge & \nu \|\delta\beta_{M^c}\|^2_1 -2 \|\delta\beta_M\|_1\|\delta\beta_{M^c}\|_1 ,\end{eqnarray}
having used the normalization to 1 of all columns of $\mb X$ (which bounds the absolute values of all entries in $\hat{\S}$ by 1) for the second term in the second last inequality and the \emph{Positive Eigenvalue Condition} for the first term in the last inequality (together with the fact that $\min_{k\in M^c} \beta_k \ge 0$ by definition of $M$ in (\ref{eq:M})). Using this bound in (\ref{eq:h1}) and dividing by $\|\delta \beta_{M^c}\|_1$ yields that, with probability at least $1-(p+s)(1-\Phi(C))$, 
\begin{eqnarray} \|\delta\beta_{M^c}\|_1 &\le& 2 \nu^{-1}  \Big(  C \sigma / \sqrt{n} + \|\delta\beta_M\|_1\Big)\nonumber \\ & =& 2 \nu^{-1}  \Big(  \frac{C \sigma }{\|\delta\beta_M\|_1 \sqrt{n}} +1 \Big)   \|\delta\beta_M\|_1 \label{eq:h2}  \end{eqnarray}
Evidently  $\|\delta \beta_M\|_1\le C\sigma/\sqrt{n}$ is either true or not. If it is true, then it follows trivially from the first inequality in  (\ref{eq:h2}) that $\|\delta\beta_{M^c}\|_1\le 4\nu^{-1} C\sigma/\sqrt{n}$ and hence $\|\delta\beta\|_1 \le (1+4\nu^{-1}) C\sigma/\sqrt{n} \le 5\nu^{-1} C\sigma /\sqrt{n}$, and the bound in (\ref{eq:toshowlem}) holds true. 

Alternatively, if $\|\delta \beta_M \|_1 > C\sigma /\sqrt{n}$ we have from the second inequality in (\ref{eq:h2}) that $\|\delta\beta_{M^c}\|_1\le L \|\delta\beta_M\|_1$ for $L=4\nu^{-1}$ and thus, using $N\subseteq M^c$, also $\|\delta\beta_{N}\|_1\le L \|\delta\beta_S\|_1$. The vector $\delta \beta$ is then in $\mathcal{R}(L,S)$.
Using  the compatibility condition, it follows that  \[ \delta\beta^T \hat{\S} \delta\beta \; \ge \; \frac{\phi}{s} \| \delta \beta \|_1^2  .\] Using this in (\ref{eq:h1}), 
\begin{equation}\label{eq:hhm} \frac{\phi}{s} \|\delta \beta\|_1^2 \; \le\;  2C\sigma \|\delta \beta_{M^c}\|_1/\sqrt{n} .\end{equation}
Using $\|\delta\beta_{M^c}\|_1\le \|\delta\beta\|_1 $, it follows that  \begin{equation}\label{eq:end} \|\delta\beta\|_1 \le 2C \sigma s/\sqrt{ n \phi} ,\end{equation} which also satisfies the bound in (\ref{eq:toshowlem}). Hence, the bound (\ref{eq:toshowlem}) holds under both possible scenarios ($\|\delta \beta_M\|_1\le C\sigma/\sqrt{n}$  true or false) and 
the proof is complete.

 \end{proof}

\begin{lemma}\label{lemma:equiv}
Let $\hat{\beta}^{ols}$ be the least squares estimator restricted to $S$:
\[ \hat{\beta}^{ols} \; =\;   \mbox{argmin}_\beta \; \| \mb Y - \mb X\beta\|_2^2 \;\;\mbox{such that}\;\; \beta_N \equiv 0  .   \]
If $\min_{k\in S} \beta^*_k \ge C\sigma /\sqrt{n \phi}$,
Then 
\[ P( \hat{\beta}^{ols} \equiv \hat{\beta}^{oracle}) \; \ge \; 1- s(1-  \Phi(C)) ,\] 
and, with at least the same probability $1-s(1-\Phi(C))$,
\[ \| \beta^* - \hat{\beta}^{oracle} \|_\infty \; \le \;  C \sigma / \sqrt{n \phi} \]
\end{lemma}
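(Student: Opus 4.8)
\emph{Proof plan.} The plan is to show that, with the stated probability, the constrained estimator $\hat{\beta}^{oracle}$ coincides with the ordinary least squares estimator $\hat{\beta}^{ols}$ restricted to $S$, whose distribution is explicitly Gaussian, and then to read off both assertions from that distribution.

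First I would record a deterministic observation. The \emph{Compatibility Condition} forces $\hat{\S}_{SS}=n^{-1}\mb X_S^{T}\mb X_S$ to be positive definite, since it gives $\beta_S^{T}\hat{\S}\beta_S\ge(\phi/s)\|\beta_S\|_1^{2}>0$ for every nonzero $\beta$ supported on $S$; in particular $\mb X_S$ has full column rank and $\hat{\beta}^{ols}$ is the unique minimiser of the strictly convex map $\beta_S\mapsto\|\mb Y-\mb X_S\beta_S\|_2^{2}$ over $\mathbb R^{s}$, while $\hat{\beta}^{oracle}$ is its unique minimiser over the nonnegative orthant (the constraint $\beta_N\equiv0$ being common to both and positivity on $N$ being vacuous). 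Hence on the event $\{\hat{\beta}^{ols}_k\ge0\text{ for all }k\in S\}$ the unconstrained minimiser is feasible for the constrained problem, so $\hat{\beta}^{ols}\equiv\hat{\beta}^{oracle}$; on that same event $\|\beta^{*}-\hat{\beta}^{oracle}\|_\infty=\|\hat{\beta}^{ols}_S-\beta^{*}_S\|_\infty$. It therefore suffices to control $\hat{\beta}^{ols}_S-\beta^{*}_S$.

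Since $\beta^{*}$ is supported on $S$, one has $\hat{\beta}^{ols}_S-\beta^{*}_S=(\mb X_S^{T}\mb X_S)^{-1}\mb X_S^{T}\varepsilon$, a centred Gaussian vector with covariance $\sigma^{2}(\mb X_S^{T}\mb X_S)^{-1}=(\sigma^{2}/n)\hat{\S}_{SS}^{-1}$, so for each $k\in S$ the coordinate $\hat{\beta}^{ols}_k-\beta^{*}_k$ is $\mathcal N\!\big(0,(\sigma^{2}/n)[\hat{\S}_{SS}^{-1}]_{kk}\big)$. The quantitative core is the variance bound $[\hat{\S}_{SS}^{-1}]_{kk}\le\phi^{-1}$, which I would try to extract from the \emph{Compatibility Condition} via the variational identity $[\hat{\S}_{SS}^{-1}]_{kk}=\big(\min\{v^{T}\hat{\S}v:\operatorname{supp}(v)\subseteq S,\ v_k=1\}\big)^{-1}$: any competitor $v$ vanishes on $N$, hence lies in $\mathcal R(L,S)$, and the \emph{Compatibility Condition} gives $v^{T}\hat{\S}v\ge(\phi/s)\|v\|_1^{2}$. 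Granting $[\hat{\S}_{SS}^{-1}]_{kk}\le\phi^{-1}$, the standard deviation of $\hat{\beta}^{ols}_k-\beta^{*}_k$ is at most $\sigma/\sqrt{n\phi}$, and since $\min_{k\in S}\beta^{*}_k\ge C\sigma/\sqrt{n\phi}$ the Gaussian tail gives $P(\hat{\beta}^{ols}_k-\beta^{*}_k\le-C\sigma/\sqrt{n\phi})\le 1-\Phi(C)$; this event contains $\{\hat{\beta}^{ols}_k<0\}$, so a union bound over the $s$ indices in $S$ yields $P(\hat{\beta}^{ols}\equiv\hat{\beta}^{oracle})\ge 1-s(1-\Phi(C))$. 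Intersecting additionally with the upper-tail events gives $|\hat{\beta}^{ols}_k-\beta^{*}_k|\le C\sigma/\sqrt{n\phi}$ for all $k\in S$ with probability of the same order, hence the claimed $\ell_\infty$ bound on $\|\beta^{*}-\hat{\beta}^{oracle}\|_\infty$.

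The step I expect to be the main obstacle is the variance estimate $[\hat{\S}_{SS}^{-1}]_{kk}\le\phi^{-1}$, equivalently $\min\{v^{T}\hat{\S}v:\operatorname{supp}(v)\subseteq S,\ v_k=1\}\ge\phi$: the crude substitution $\|v\|_1\ge|v_k|=1$ only gives $v^{T}\hat{\S}v\ge\phi/s$, i.e.\ the lossy $[\hat{\S}_{SS}^{-1}]_{kk}\le s/\phi$, and obtaining the sharp constant requires exploiting that the minimiser $v^{*}=\hat{\S}_{SS}^{-1}e_k/[\hat{\S}_{SS}^{-1}]_{kk}$ is sufficiently spread out in the $\ell_1$-sense — this is precisely the link between the compatibility constant and the per-coordinate precision of least squares on $S$. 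A minor secondary point is the one-sided/two-sided bookkeeping in the final union bounds. Everything else — the convexity/feasibility reduction and the Gaussian tail estimates — is routine.
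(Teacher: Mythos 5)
Your plan is essentially the paper's own proof: reduce the statement to showing $\min_{k\in S}\hat{\beta}^{ols}_k\ge 0$ (so that the unconstrained estimator restricted to $S$ is feasible for, hence equal to, $\hat{\beta}^{oracle}$), use that $\hat{\beta}^{ols}_S-\beta^*_S=(\mb X_S^T\mb X_S)^{-1}\mb X_S^T\varepsilon$ is centred Gaussian with covariance $(\sigma^2/n)\hat{\S}_{SS}^{-1}$, bound each coordinate's variance by $\sigma^2/(n\phi)$, and conclude with a Gaussian tail bound, Bonferroni, and the beta-min condition. The one step you leave open, $[\hat{\S}_{SS}^{-1}]_{kk}\le\phi^{-1}$, is exactly the step the paper settles in one sentence: it asserts that the compatibility condition bounds the minimal eigenvalue of $\hat{\S}_{SS}$ from below by $\phi$, and then uses the spectral bound $[\hat{\S}_{SS}^{-1}]_{kk}\le\lambda_{\min}(\hat{\S}_{SS})^{-1}\le\phi^{-1}$ rather than your variational identity. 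So as a blind reconstruction your proposal matches the intended argument, but it is genuinely incomplete at its quantitative core, and your instinct that this is the hard point is correct.

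Moreover, your proposed repair (arguing that the minimiser of $\min\{v^T\hat{\S}v: \operatorname{supp}(v)\subseteq S,\ v_k=1\}$ is $\ell_1$-spread) cannot work in general, and in fact the paper's one-line claim does not follow from the stated $\ell_1$-definition either: for $S$-supported $v$ the definition of $\phi^2_{\mathit{compatible}}$ only gives $v^T\hat{\S}v\ge(\phi/s)\|v\|_1^2\ge(\phi/s)\|v\|_2^2$, i.e.\ $\lambda_{\min}(\hat{\S}_{SS})\ge\phi/s$ and your ``lossy'' $[\hat{\S}_{SS}^{-1}]_{kk}\le s/\phi$. The minimiser need not be spread: take $\hat{\S}_{SS}$ block-diagonal with a pair $\{k,j\}$ of correlation $\rho$ close to $1$ and an identity block on the remaining $s-2$ coordinates; then $\min\{v^T\hat{\S}v: v_k=1\}=1-\rho^2$ is attained on the pair alone, while the $S$-restricted $\ell_1$-eigenvalue is of order $s(1-\rho)/2$, so for $s$ large and $\phi$ taken as large as the compatibility constant permits the inequality $[\hat{\S}_{SS}^{-1}]_{kk}\le\phi^{-1}$ fails. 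In other words, the factor-$s$ strengthening invoked in the paper amounts to reading the assumption as (or adding) a restricted $\ell_2$-eigenvalue bound $\lambda_{\min}(\hat{\S}_{SS})\ge\phi$; without that, your argument (and the paper's) delivers only $\|\beta^*-\hat{\beta}^{oracle}\|_\infty\le C\sigma\sqrt{s}/\sqrt{n\phi}$ under a correspondingly stronger beta-min condition, with the extra $\sqrt{s}$ propagating into Theorem~\ref{theorem:main}. Your secondary worry about one- versus two-sided tails is minor: the paper itself counts only $s$ one-sided events for a two-sided supremum bound, which costs at most a factor $2$ in $s(1-\Phi(C))$.
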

\begin{proof}
It is only necessary to show that $\min_{ k \in S} \hat{\beta}^{ols}_k \ge 0$ with probability at least $1 - s(1-\Phi(C))$. 

The error term has, under the made assumptions, a normal distribution, $\hat{\beta}^{ols}_k -\beta^* \sim \mathcal{N}(0, \sigma^2 (n \hat{\S}_{SS})^{-1}_k )$ for all $k\in S$. The minimal eigenvalue of $\hat{\S}_{SS}$ is bounded from below by $\phi$ by the compatibility condition and the variance of $\hat{\beta}^{ols}_k$ is thus bounded from above by $\phi^{-1}\sigma^2/n$ for all $k\in S$. 
 It follows with Bonferroni's inequality that, with probability at least $1-s(1-\Phi(C))$, \begin{equation}\label{eq:linfols} \|\beta^* - \hat{\beta}^{ols} \|_\infty \le C \sigma /\sqrt{n\phi}.\end{equation}
If $\min_{k\in S} \beta^*_k \ge C\sigma /\sqrt{n \phi}$, then (\ref{eq:linfols}) implies that $\min_{k\in S} \hat{\beta}^{ols}_k \ge 0$ and thus $\hat{\beta}^{oracle} \equiv \hat{\beta}^{ols}$ and thus also \[ \|\beta^* - \hat{\beta}^{oracle} \|_\infty \le C \sigma /\sqrt{n\phi} ,\]
which completes the proof. 
\end{proof}

\begin{lemma}\label{lemma:grad}
With probability at least $1- p (1-\Phi(C))$, 
\[  \max_{k \in N} (\mb Y - \mb X \hat{\beta}^{oracle})^T \mb X_k  \le C\sigma \sqrt{n} . \]
\end{lemma}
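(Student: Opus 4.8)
The plan is to express the oracle residual as an orthogonal projection of the noise and then to bound a maximum of at most $p$ centered Gaussian variables by the standard normal tail inequality and a union bound.

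First I would reduce the problem to ordinary least squares on $S$. By Lemma~\ref{lemma:equiv}, on an event of probability at least $1-s(1-\Phi(C))$ (under the assumption $\min_{k\in S}\beta^*_k > C\sigma/\sqrt{n\phi}$ that is in force in Lemma~\ref{lemma:main}), the constrained fit $\hat{\beta}^{oracle}$ coincides with the unconstrained least squares fit using only the columns in $S$; on that event $\mb X\hat{\beta}^{oracle} = \Pi\mb Y$, where $\Pi$ denotes the orthogonal projection onto the column space of $\mb X_S$, which is well defined because the compatibility condition makes $\hat{\S}_{SS}$ positive definite. Since $\beta^*$ is supported on $S$, the signal $\mb X\beta^*$ lies in that column space, so $(\mb I-\Pi)\mb X\beta^* = 0$ and the residual becomes
\[ \mb R \;=\; \mb Y - \mb X\hat{\beta}^{oracle} \;=\; (\mb I-\Pi)\mb Y \;=\; (\mb I-\Pi)\varepsilon . \]
Hence $\mb R^T\mb X_k = \varepsilon^T(\mb I-\Pi)\mb X_k$ for every $k\in N$.

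Next I would carry out the Gaussian estimate. For each fixed $k$ the quantity $\varepsilon^T(\mb I-\Pi)\mb X_k$ is a linear functional of $\varepsilon\sim\mathcal N(0,\sigma^2\mb I_n)$, hence a centered normal variable of variance $\sigma^2\|(\mb I-\Pi)\mb X_k\|_2^2$; because $\mb I-\Pi$ is an orthogonal projection and $\|\mb X_k\|_2^2 = n$, this variance is at most $\sigma^2 n$. Using $P(Z>C) = 1-\Phi(C)$ for $Z\sim\mathcal N(0,1)$ together with monotonicity of the Gaussian tail in the variance, one gets $P(\varepsilon^T(\mb I-\Pi)\mb X_k > C\sigma\sqrt n) \le 1-\Phi(C)$ for each $k$ (the left-hand side being $0$ when $(\mb I-\Pi)\mb X_k = 0$). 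A union bound over the at most $p$ indices in $N$ then yields $\max_{k\in N}\varepsilon^T(\mb I-\Pi)\mb X_k \le C\sigma\sqrt n$ with probability at least $1-p(1-\Phi(C))$, which together with the first step is the claimed bound.

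The only point I expect to require care is the interaction of the two events, because the identity $\mb R = (\mb I-\Pi)\varepsilon$ holds only on the Lemma~\ref{lemma:equiv} event: off it the residual equals $(\mb I-\Pi_{\hat S})\mb Y$ for the random active set $\hat S\subsetneq S$ and then carries a component in the column space of $\mb X_S$ which need not be small, so the lemma is really meant to be used jointly with Lemma~\ref{lemma:equiv} (this is the origin of the extra $s(1-\Phi(C))$ term appearing in the proof of Lemma~\ref{lemma:main}). A clean way to keep the two union bounds from interfering is to note that for Gaussian $\varepsilon$ the parts $\Pi\varepsilon$ and $(\mb I-\Pi)\varepsilon$ are independent: the Lemma~\ref{lemma:equiv} event is measurable with respect to $\Pi\varepsilon$, whereas the maximum above depends only on $(\mb I-\Pi)\varepsilon$. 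Everything else is routine.
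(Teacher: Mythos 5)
Your proposal is essentially the paper's own proof: condition on the Lemma~\ref{lemma:equiv} event so that the residual equals $P_{S\perp}\varepsilon$, note each $(P_{S\perp}\varepsilon)^T\mb X_k$ is centered Gaussian with variance at most $\sigma^2 n$ since the columns have norm $\sqrt n$, and finish with a Bonferroni bound over $N$, adding the failure probability of the Lemma~\ref{lemma:equiv} event. The only small point is the accounting: the stated constant $1-p(1-\Phi(C))$ comes from using $|N|=p-s$, so that the $s$ terms from Lemma~\ref{lemma:equiv} and the $|N|$ terms from the union bound total exactly $p$ (as the paper does); with your crude bound $|N|\le p$ you would only reach $1-(p+s)(1-\Phi(C))$ even using the independence of $\Pi\varepsilon$ and $(\mb I-\Pi)\varepsilon$, so simply count $N$ exactly and the independence remark becomes unnecessary.
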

\begin{proof}
We condition on the event $\hat{\beta}^{oracle} \equiv \hat{\beta}^{ols}$, which happens according to Lemma \ref{lemma:equiv} with probability at least $1-s(1-\Phi(C))$. Then $\mb Y  - \mb X \hat{\beta}^{oracle} = \mb Y - \mb X \hat{\beta}^{ols} =P_{S\perp} \mb Y$, where $P_{S\perp} \mb Z$ is the projection of a vector $\mb Z\in\mathbb{R}^n$ into the space orthogonal to $\mb X_S$. Now , $P_{S\perp} \mb Y = P_{S\perp} ( \mb X \beta^* + \varepsilon) = P_{S\perp} \varepsilon$.   The distribution of $(P_{S\perp} \varepsilon)^T \mb X_k$ is, for every $k\in N$, normal with mean 0 and variance at most $\sigma^2n$, and thus $P( (P_{S\perp} \varepsilon)^T \mb X_k \ge C \sigma \sqrt{n}) \le 1-\Phi(C)$ for all $k\in N$ and, using a Bonferroni bound, $P( \max_{k\in N} (P_{S\perp} \varepsilon)^T \mb X_k \ge C \sigma \sqrt{n}) \le |N| (1-\Phi(C))$. The unconditional probability of $\max_{k\in N} (P_{S\perp} \varepsilon)^T \mb X_k \ge C \sigma \sqrt{n}$ is thus at least $1-s(1-\Phi(C))-|N|(1-\Phi(C)) = 1-(s+|N|) (1-\Phi(C)) = 1-p(1-\Phi(C))$, which completes the proof.

\end{proof}


\begin{thebibliography}{30}
\providecommand{\natexlab}[1]{#1}
\providecommand{\url}[1]{\texttt{#1}}
\expandafter\ifx\csname urlstyle\endcsname\relax
  \providecommand{\doi}[1]{doi: #1}\else
  \providecommand{\doi}{doi: \begingroup \urlstyle{rm}\Url}\fi

\bibitem[Bellavia et~al.(2006)Bellavia, Macconi, and
  Morini]{bellavia2006interior}
S.~Bellavia, M.~Macconi, and B.~Morini.
\newblock An interior point newton-like method for non-negative least-squares
  problems with degenerate solution.
\newblock \emph{Numerical Linear Algebra with Applications}, 13:\penalty0
  825--846, 2006.

\bibitem[Bickel et~al.(2009)Bickel, Ritov, and Tsybakov]{bickel07dantzig}
P.~Bickel, Y.~Ritov, and A.~Tsybakov.
\newblock Simultaneous analysis of {Lasso} and {Dantzig} selector.
\newblock \emph{Annals of Statistics}, 37:\penalty0 1705--1732, 2009.

\bibitem[Boutsidis and Drineas(2009)]{boutsidis2009random}
C.~Boutsidis and P.~Drineas.
\newblock Random projections for the nonnegative least-squares problem.
\newblock \emph{Linear Algebra and its Applications}, 431:\penalty0 760--771,
  2009.

\bibitem[Breiman(1995)]{breiman95better}
L.~Breiman.
\newblock Better subset regression using the nonnegative garrote.
\newblock \emph{Technometrics}, 37:\penalty0 373--384, 1995.

\bibitem[Bunea et~al.(2007)Bunea, Tsybakov, and Wegkamp]{bunea06agr}
B.F. Bunea, A.B. Tsybakov, and M.H. Wegkamp.
\newblock Aggregation for gaussian regression.
\newblock \emph{Annals of Statistics}, 35:\penalty0 1674--1697, 2007.

\bibitem[Bunea et~al.(2006)Bunea, Tsybakov, and Wegkamp]{bunea06sparsity}
F.~Bunea, A.~Tsybakov, and M.~Wegkamp.
\newblock Sparsity oracle inequalities for the lasso.
\newblock \emph{Electronic Journal of Statistics}, pages 169--194, 2006.

\bibitem[Candes and Tao(2007)]{candes2005dss}
E.~Candes and T.~Tao.
\newblock The {Dantzig} selector: statistical estimation when $p$ is much
  larger than $ n$.
\newblock \emph{Annals of Statistics}, 35:\penalty0 2312--2351, 2007.

\bibitem[Castro et~al.(2004)Castro, Coates, Liang, Nowak, and
  Yu]{castro2004network}
R.~Castro, M.~Coates, G.~Liang, R.~Nowak, and B.~Yu.
\newblock Network tomography: Recent developments.
\newblock \emph{Statistical Science}, pages 499--517, 2004.

\bibitem[Chen and Plemmons(2009)]{chen2009nonnegativity}
D.~Chen and R.J. Plemmons.
\newblock \emph{Nonnegativity constraints in numerical analysis}.
\newblock World Scientific Press, River Edge, NJ, USA, 2009.

\bibitem[Ding et~al.(2010)Ding, Li, and Jordan]{ding2010convex}
C.H.Q. Ding, T.~Li, and M.I. Jordan.
\newblock Convex and semi-nonnegative matrix factorizations.
\newblock \emph{IEEE Transactions on Pattern Analysis and Machine
  Intelligence}, 32:\penalty0 45--55, 2010.

\bibitem[Donoho et~al.(1992)Donoho, Johnstone, Hoch, and
  Stern]{donoho1992maximum}
D.L. Donoho, I.M. Johnstone, J.C. Hoch, and A.S. Stern.
\newblock Maximum entropy and the nearly black object.
\newblock \emph{Journal of the Royal Statistical Society. Series B}, pages
  41--81, 1992.

\bibitem[Efron et~al.(2004)Efron, Hastie, Johnstone, and
  Tibshirani]{efron04least}
B.~Efron, T.~Hastie, I.~Johnstone, and R.~Tibshirani.
\newblock Least angle regression.
\newblock \emph{Annals of Statistics}, 32:\penalty0 407--451, 2004.

\bibitem[Hoerl and Kennard(1970)]{hoerl1970ridge}
A.E. Hoerl and R.W. Kennard.
\newblock Ridge regression: Biased estimation for nonorthogonal problems.
\newblock \emph{Technometrics}, pages 55--67, 1970.

\bibitem[Kim et~al.(2006)Kim, Sra, and Dhillon]{kim2006new}
D.~Kim, S.~Sra, and I.S. Dhillon.
\newblock A new projected quasi-newton approach for the nonnegative least
  squares problem.
\newblock Technical report, Department of Computer Science, University of
  Texas, TR-06-54, 2006.

\bibitem[Kim and Park(2007)]{kim2007sparse}
H.~Kim and H.~Park.
\newblock Sparse non-negative matrix factorizations via alternating
  non-negativity-constrained least squares for microarray data analysis.
\newblock \emph{Bioinformatics}, 23:\penalty0 1495, 2007.

\bibitem[Lawrence et~al.(2006)Lawrence, Michailidis, and
  Nair]{lawrence2006network}
E.~Lawrence, G.~Michailidis, and V.N. Nair.
\newblock Network delay tomography using flexicast experiments.
\newblock \emph{Journal of the Royal Statistical Society: Series B},
  68:\penalty0 785--813, 2006.

\bibitem[Lawson and Hanson(1995)]{lawson1995solving}
C.L. Lawson and R.J. Hanson.
\newblock \emph{Solving least squares problems}, volume~15.
\newblock Society for Industrial Mathematics, 1995.

\bibitem[Lee and Seung(2001)]{lee2001algorithms}
D.D. Lee and H.S. Seung.
\newblock Algorithms for non-negative matrix factorization.
\newblock \emph{Advances in neural information processing systems}, 13, 2001.

\bibitem[Lee et~al.(1999)Lee, Seung, et~al.]{lee1999learning}
D.D. Lee, H.S. Seung, et~al.
\newblock Learning the parts of objects by non-negative matrix factorization.
\newblock \emph{Nature}, 401:\penalty0 788--791, 1999.

\bibitem[Meinshausen and Yu(2009)]{meinshausen06lassotype}
N.~Meinshausen and B.~Yu.
\newblock Lasso-type recovery of sparse representations from high-dimensional
  data.
\newblock \emph{Annals of Statistics}, 7:\penalty0 246--270, 2009.

\bibitem[Slawski et~al.(2011)Slawski, Hein, and Campus]{slawskisparse}
M.~Slawski, M.~Hein, and E.~Campus.
\newblock Sparse recovery by thresholded non-negative least squares.
\newblock Technical report, Department of Computer Science, University of
  Saarbruecken, 2011.

\bibitem[Tibshirani(1996)]{tibshirani96regression}
R.~Tibshirani.
\newblock Regression shrinkage and selection via the lasso.
\newblock \emph{Journal of the Royal Statistical Society, Series B},
  58:\penalty0 267--288, 1996.

\bibitem[Van De~Geer(2008)]{van2008high}
S.A. Van De~Geer.
\newblock High-dimensional generalized linear models and the lasso.
\newblock \emph{The Annals of Statistics}, 36:\penalty0 614--645, 2008.

\bibitem[Van De~Geer and B{\"u}hlmann(2009)]{van2009conditions}
S.A. Van De~Geer and P.~B{\"u}hlmann.
\newblock On the conditions used to prove oracle results for the lasso.
\newblock \emph{Electronic Journal of Statistics}, 3:\penalty0 1360--1392,
  2009.

\bibitem[Wainwright(2009)]{wainwright06sth}
M.J. Wainwright.
\newblock Sharp thresholds for high-dimensional and noisy recovery of sparsity.
\newblock \emph{IEEE Transactions on Information Theory}, 55:\penalty0
  2183--2202, 2009.

\bibitem[Waterman(1977)]{waterman1977c1}
M.~S. Waterman.
\newblock Least squares with nonnegative regression coefficients.
\newblock \emph{Journal of Statistical Computation and Simulation}, 6:\penalty0
  67--70, 1977.

\bibitem[Xi et~al.(2006)Xi, Michailidis, and Nair]{xi2006estimating}
B.~Xi, G.~Michailidis, and V.N. Nair.
\newblock Estimating network loss rates using active tomography.
\newblock \emph{Journal of the American Statistical Association}, 101:\penalty0
  1430--1448, 2006.

\bibitem[Yuan and Lin(2006)]{yuan2006model}
M.~Yuan and Y.~Lin.
\newblock Model selection and estimation in regression with grouped variables.
\newblock \emph{Journal of the Royal Statistical Society: Series B},
  68:\penalty0 49--67, 2006.

\bibitem[Zhang and Huang(2008)]{zhang06model}
C.H. Zhang and J.~Huang.
\newblock {The sparsity and bias of the lasso selection in high-dimensional
  linear regression}.
\newblock \emph{Annals of Statistics}, 36:\penalty0 1567--1594, 2008.

\bibitem[Zou(2006)]{zou05adaptive}
H.~Zou.
\newblock The adaptive lasso and its oracle properties.
\newblock \emph{Journal of the American Statistical Association}, 101:\penalty0
  1418--1429, 2006.

\end{thebibliography}
\end{document}